\newcommand{\Tr}{\operatorname{Tr}}
\newcommand{\be}{\begin{equation}}
\newcommand{\ee}{\end{equation}}
\newcommand{\ba}{\begin{eqnarray}}
\newcommand{\ea}{\end{eqnarray}}
\newcommand{\ketbra}[2]{|#1\rangle \langle #2|}
\newtheorem{theorem}{Theorem}
\begin{document}

\title{Thermodynamic advancement in the causally inseparable occurrence of thermal maps}

\author{Tamal Guha}
\email{g.tamal91@gmail.com}    
\affiliation{Physics and Applied Mathematics Unit, Indian Statistical Institute, 203 B.T. Road, Kolkata 700108, India.}

\author{Mir Alimuddin}
\email{aliphy80@gmail.com}    
\affiliation{Physics and Applied Mathematics Unit, Indian Statistical Institute, 203 B.T. Road, Kolkata 700108, India.}

\author{Preeti Parashar}
\email{parashar@isical.ac.in}    
\affiliation{Physics and Applied Mathematics Unit, 
	Indian Statistical Institute, 
	203 B.T. Road, Kolkata 700108, India.}

\begin{abstract}
Quantum mechanics allows the occurrence of events {\it without} having any definite causal order. Here, it is shown that the application of two different thermal channels in the causally inseparable order can enhance the potential to extract work, in contrast to any of their definite (separable) order of compositions. This enhancement is also possible even {\it without} assigning any {\it thermodynamic resource} value to the controlling qubit. Further, we provide the first non-trivial example of causal enhancement with {\it non-unital pin} maps, for which it is still {\it not clear} how to obtain a superposition of path structure (under definite causal order). Hence, it may be a potential candidate to accentuate the difference between {\it superposition of time} and {\it superposition of path}.
\end{abstract}



\maketitle
\section{Introduction}
The second law of thermodynamics, one of the most fundamental principles of science, directs the occurrence of any physical process. In addition, the execution of such a process actually makes an irreversible transformation of the initial state of the system to its final form. For a thermodynamic system, any process respecting the second law, finally leads to an equilibrium state for which the population in different energy levels follows the Gibbs-Boltzmann distribution. This is the ultimate state for any arbitrary thermodynamic system evolving under the thermalization process as the irreversiblity of energy to work conversion achieves its maximum limit for this distribution. To quantify this irreversibility, a thermodynamic potential, namely {\it "free energy"} is defined as $F=U-TS$, where $U,T,S$ stand for the internal energy, temperature and the entropy of the system respectively, which takes the minimum value for the above mentioned distribution.\par 
In the quantum domain, the state which exhibits this distribution on energy eigen-states are termed as $\beta${\it -thermal} state, represented as $\frac{e^{-\beta \hat{H}}}{Tr(e^{-\beta \hat{H}})}$, where $\hat{H}$ is the governing Hamiltonian of the system\cite{lenard}. On the other hand, the most general quantum operation can be visualized as a completely positive trace preserving (CPTP) map acting on a quantum state. Hence, any thermalization process can be represented by a CPTP map and thus can be termed as a thermal channel. For the finite particle scenario, under the action of this CPTP map, not only the free energy (as defined earlier) corresponding to the initial quantum state, but all the Renyi-$\alpha$ free energy $F_{\alpha}, \alpha\in [0,\infty)$ will decrease under the action of these thermal channels \cite{brandao'PNAS}, where $F_{\alpha}(\rho)=Tr(\rho H)-T S_{\alpha}(\rho)$ and $S_{\alpha}(\rho)$ is the Renyi-$\alpha$ entropy. These $\alpha-$free energies play a significant role in order to determine the direction of the state transformation in the microscopic regime, even with the assistance of an additional catalyst system.  However, in the asymptotic limit, the well known {\it free energy} (for $\alpha\to1$) is the only determining parameter. 
On the other hand, a significant connection between memory capacity of thermal channels and the free energy of the same has recently been established in \cite{Varun'PRL}.\par 
However, all these studies have one common feature which is to understand the departure (or, validity) of traditional thermodynamics  in the quantum regime. Alternatively, the superiority of non-classical features, viz. entanglement and discord of quantum correlations have been studied from the perspective of work extraction in \cite{huber'PRX,our'PRA,demonic_ergo}. The key feature in all these is the {\it quantum superposition}, presence of which at the single particle level can cause locking of work extraction in a different scenario\cite{popescu'bath}. In this paper, we have encountered the superiority in work extraction through the presence of quantum superposition in an assisting qubit.\par 
Here, we have considered the action of two different thermal channels on a quantum system. It is obvious from the above discussion that for the final state, free energy should be non-increasing, hence the extractable amount of work serves as a monotone under these channel actions. However, it is interesting to ask whether there can be an increment in the extractable work, in the presence of another assisting quantum system. More precisely, one can apply initially an energy preserving unitary on the system-ancilla joint state to build a correlation among them and then allow the system qubit to evolve through the above mentioned channels. Thereafter, communicating the outcomes of a preferred measurement on the assisting quantum state, one can enhance the amount of extractable work form the resulting system qubit. On the other hand, one can directly use this ancillary system to control the order of these thermal channels coherently, which we will refer to as {\it causally inseparable occurrence of thermal maps} (CIOTM). Exhibiting the power of superposition as the controller of quantum processes have drawn lot of attraction recently from the perspective of several information processing tasks. Firstly, in \cite{brukner'Nat}, the authors have introduced an inequality on the average pay-off for a {\it GYNI}-type game, violation of which ascertains the presence of a nontrivial correlation shared between the players, which can not have any pre-defined causal structure. Thereafter, it is also shown that, the indefinite causal order between the action of two zero classical capacity channels can transfer non-zero information \cite{Ebler'PRL}, and more surprisingly, causally inseparable occurrence of two entanglement breaking channels can establish perfect entanglement between the sender and the receiver\cite{our'arxiv}. However, in \cite{abott}, Abbott {\it et. al.} have exhibited an advancement by controlling coherently the path of system qubit in classical as well as in quantum information processing without invoking any causal indefiniteness. Not withstanding the criticisms regarding the {\it (non)unique} indefinite causal advancements \cite{brukner'PRA}, in the present work we have introduced a CIOTM scenario and exhibited a surprising feature what makes it unique in comparison to other existing results ({\it depicted in Fig \ref{proc}}).\par 
This paper is organized as follows: In Section II  preliminary concepts regarding quantum channels have been discussed, with a special focus on relevant thermal channels. Section IIIA mainly focuses on the enhancement of extractable work under CIOTM scenario with expense of a resourceful controller qubit, whereas, Section IIIB deals with the thermodynamically free controller qubit. Finally, we have concluded our results in Section IV. A more general study on the parameters of the thermal channels is discussed in the Appendix.
\section{Preliminaries}
In general, a quantum operation $\mathcal{N}$ acting on a state $\rho\in\mathcal{D}(\mathcal{H})\subset\mathcal{L}(\mathcal{H})$ is a completely positive trace preserving (CPTP) map, where, $\mathcal{L}(\mathcal{H})$ and $\mathcal{D}(\mathcal{H})$ represents respectively the set of all linear bounded operators and density matrices over the Hilbert space $\mathcal{H}$. The action of any CPTP map on $\rho$ can be represented as $\mathcal{N}(\rho)=\sum_{k}E_{k}\rho E_{k}^{\dagger}$, where $E_{k}$'s are the Kraus operators acting on  $\mathcal{H}$, which satisfy $\sum_{k}E_{k}^{\dagger}E_{k}=\mathbb{I}$. However, the action of a quantum channel on an input quantum state can be visualized as the marginal dynamics of a global unitary $\mathbb{U}$ acting jointly on $\rho\otimes\ketbra{e_{0}}{e_{0}}$, where $\rho$ is the system state and $\ket{e_{0}}$ is the initial environment state, i.e., 
\begin{equation}\label{e1}
	\mathcal{N}(\rho)= Tr_{E}[\mathbb{U}(\rho\otimes\ketbra{e_{0}}{e_{0}})\mathbb{U}^{\dagger}]=\sum_{k}E_{k}\rho E_{k}^{\dagger}
\end{equation}
where, $E_{k}=\bra{e_{k}}\mathbb{U}\ket{e_{0}}$ are the above mentioned {\it Kraus operators} acting on $\mathcal{D}(\mathcal{H})$.\par 
A CPTP map acts as a thermal channel on the system qubit, if $\mathbb{U}$ is an energy-preserving unitary on the global system and a thermal state $\tau_{\beta}$ is taken as the environment state in Eq.\eqref{e1}. Under the action of these channels, the entropic distance, and hence the free energy difference between system and the thermal state is strictly non-increasing. Further, this distance determines the rate of state convertibility \cite{brandao'PRL}. However, to assign a CPTP map as a thermal channel, all the Renyi-$\alpha$ free energies should monotonically decrease under the channel action \cite{brandao'PNAS}. On the other hand, the extractable amount of work from a single particle quantum system (single-shot) is again in disagreement with the traditional free energy difference, which has been exhibited in \cite{Horodecki'Nature}. There the authors have enunciated two different free energies, to quantify the amount of extractable work in presence of a thermal bath and the work cost in formation of a particular state in the quantum nanoscale regime. Though the complications regarding the characterization of all the free energies makes it difficult to identify the class of thermal channels in general, it is easy to argue that any pin map which maps the total quantum state space to a fixed thermal state is always a thermal channel. Further, it also follows from \cite{Horodecki'Nature,Varun'PRL} that the domain of a quantum state under thermal process is convex and hence the actions itself form a convex set.\par 
From the above argument it is clear that the {\it generalized amplitude damping} channel $(\mathcal{N}_{GAD})$ with unit annihilation parameter and {\it phase flip} channels $(\mathcal{N}_{PF})$ are valid thermal maps. The Kraus operators corresponding to these channels are $\{E_{i}\}_{i=0}^3$ and $\{F_{j}\}_{j=0}^1$, where
\begin{center}
	$E_0=\sqrt{p}\begin{bmatrix}
	1&0\\
	0&\sqrt{1-\gamma}\\
	\end{bmatrix}$,
	$E_1=\sqrt{p}\begin{bmatrix}
	0&\sqrt{\gamma}\\
	0&0\\
	\end{bmatrix}$,\\
	$E_2=\sqrt{1-p}\begin{bmatrix}
	\sqrt{1-\gamma}&0\\
	0&1\\
	\end{bmatrix}$,	
	$E_3=\sqrt{1-p}\begin{bmatrix}
	0&0\\
	\sqrt{\gamma}&0\\
	\end{bmatrix}$,
\end{center}
and 
\begin{center}
	$F_0=\sqrt{q}\begin{bmatrix}
	1&0\\
	0&1\\
	\end{bmatrix}$,
	$F_1=\sqrt{1-q}\begin{bmatrix}
	1&0\\
	0&-1\\
	\end{bmatrix}$.
\end{center}
Here $\gamma$ is the {\it annihilation parameter} and $p\in[0,1]$ is the parameter for equilibrium state corresponding to the actions of $E_{k}$'s, whereas $q\in[0,1]$ denotes the probability to keep the initial state unaltered through the actions of $F_{k}$'s \cite{nc_book}. It is easy to verify that for $\gamma=1$ the channel $\mathcal{N}_{GAD}$ becomes a pin map, which maps the entire Bloch sphere of qubit state-space to a single point $\tau_{p}=p\ketbra{0}{0}+(1-p)\ketbra{1}{1}$, which will serve the purpose of a thermal map for $\frac{1}{2}\leq p\leq 1$. On the other hand, the action of $\mathcal{N}_{PF}$ on the Bloch sphere can be visualized as a shrunk ellipsoid of equatorial radius $(2q-1)$ and keeping the $Z$-axis unaltered.
\section{Main Results}
We will first show that no {\it separable} order for the occurrence of the generalized amplitude damping and phase-flip channels can increase the free energy of the initial state. This motivates us to design a protocol by controlling the actions coherently to obtain higher free energetic states.\par
\begin{theorem}
	No incoherent (or, causally separable) combination of the channels $\mathcal{N}_{GAD}$ and $\mathcal{N}_{PF}$ can enhance the free energy of the initial quantum state.
\end{theorem}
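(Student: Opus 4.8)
The plan is to reduce the whole statement to a single structural fact: both constituent maps leave one and the same candidate equilibrium state invariant, and the (nonequilibrium) free energy is just an affine image of the relative entropy to that fixed state. First I would verify the shared fixed point. Since $\mathcal{N}_{GAD}$ with $\gamma=1$ is a pin map, $\mathcal{N}_{GAD}(\rho)=\tau_p$ for \emph{every} $\rho$, so in particular $\mathcal{N}_{GAD}(\tau_p)=\tau_p$. The phase-flip map acts as $\mathcal{N}_{PF}(\rho)=q\rho+(1-q)\sigma_z\rho\sigma_z$, which leaves every diagonal state untouched, and hence fixes $\tau_p=p\ketbra{0}{0}+(1-p)\ketbra{1}{1}$ as well. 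Thus $\tau_p$ is a common fixed point of both channels for $\half\le p\le 1$.

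Next I would invoke the identification $F(\rho)-F(\tau_p)=k_B T\, D(\rho\,\|\,\tau_p)$, valid when $\tau_p$ is the Gibbs state of the governing Hamiltonian (again the regime $\half\le p\le 1$). Because the relative entropy obeys the data-processing inequality, $D(\mathcal{N}(\rho)\,\|\,\mathcal{N}(\tau_p))\le D(\rho\,\|\,\tau_p)$ for any CPTP map $\mathcal{N}$, and since $\mathcal{N}(\tau_p)=\tau_p$ for $\mathcal{N}\in\{\mathcal{N}_{GAD},\mathcal{N}_{PF}\}$, each map individually is free-energy non-increasing, which is exactly the statement that they are thermal channels.

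The core step is to show that this monotonicity survives every causally separable concatenation. A definite order is merely a composition, e.g. $\mathcal{N}_{PF}\circ\mathcal{N}_{GAD}$ or $\mathcal{N}_{GAD}\circ\mathcal{N}_{PF}$, and applying the data-processing inequality once per layer yields $D\big((\mathcal{N}_2\circ\mathcal{N}_1)(\rho)\,\|\,\tau_p\big)\le D(\mathcal{N}_1(\rho)\,\|\,\tau_p)\le D(\rho\,\|\,\tau_p)$, so no ordered composition raises $F$. The most general causally separable process is a classically controlled (probabilistic) mixture $\Lambda=\sum_i\lambda_i\Lambda_i$ of such ordered compositions, each $\Lambda_i$ fixing $\tau_p$ with $\sum_i\lambda_i=1$. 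Here I would use the joint convexity of relative entropy together with $\Lambda_i(\tau_p)=\tau_p$, so that
\be
D\big(\Lambda(\rho)\,\|\,\tau_p\big)\le\sum_i\lambda_i\, D\big(\Lambda_i(\rho)\,\|\,\tau_p\big)\le D(\rho\,\|\,\tau_p),
\ee
which translates back to $F(\Lambda(\rho))\le F(\rho)$.

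The main obstacle I anticipate is not the inequality chain but pinning down the admissible class of causally separable combinations: one must argue that permitting intermediate classical feedforward between the two channel uses cannot help, since any such free (thermal) side-processing also fixes $\tau_p$ and is itself contractive under data processing, so the whole adaptive protocol remains a $\tau_p$-preserving, free-energy-non-increasing map. Once the separable class is framed as convex mixtures of $\tau_p$-preserving compositions, the result reduces to data processing plus joint convexity, and it is precisely the coherent (causally inseparable) control of the later sections that escapes this closure argument.
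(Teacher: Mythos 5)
Your proof is correct, but it takes a genuinely different and more abstract route than the paper. The paper's own proof is purely structural and yields a strictly stronger conclusion: it observes that $\mathcal{N}_{GAD}\circ\mathcal{N}_{PF}$ is trivially the pin map onto $\tau_p$ (the pin map acts last), and that $\mathcal{N}_{PF}\circ\mathcal{N}_{GAD}$ is \emph{also} the pin map onto $\tau_p$, because $\mathcal{N}_{GAD}$ first outputs $\tau_p$, which is diagonal in the $\sigma_Z$ eigenbasis and therefore invariant under $\mathcal{N}_{PF}$; hence every convex (causally separable) combination $\lambda\,\mathcal{N}_{PF}\circ\mathcal{N}_{GAD}+(1-\lambda)\,\mathcal{N}_{GAD}\circ\mathcal{N}_{PF}$ outputs exactly $\tau_p$ for \emph{every} input, and $\tau_p$ is the minimum--free-energy state with respect to the bath at $T_p$. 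You instead prove only monotonicity, $F(\Lambda(\rho))\le F(\rho)$, via the common fixed point $\tau_p$, the identity $F(\rho)-F(\tau_p)=k_B T_p\, D(\rho\|\tau_p)$, data processing, and (joint) convexity of relative entropy. What your approach buys is generality: it would apply verbatim to any pair of thermal channels sharing the Gibbs fixed point, including cases where the ordered compositions do not collapse to a constant map, and it even absorbs adaptive classical feedforward between the channel uses, which the paper never considers. What it loses is the exact pinning statement the paper actually needs downstream: the later sections contrast the coherent-control protocol against a separable-order output that is \emph{identically} $\tau_p$ with exactly zero extractable work (so the causal advantage is an activation from zero, not merely a gap in a monotone bound), and your data-processing argument alone does not recover that the output is input-independent. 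Two minor points to tighten: your fixed-point identification requires $\half\le p\le 1$ so that $T_p>0$ and $\tau_p$ is a bona fide Gibbs state of $H=\ketbra{1}{1}$ (you note this, but the edge cases $p=\half$, where $T_p\to\infty$, and $p=1$, where $D(\rho\|\tau_p)$ can diverge, deserve a word); and for the mixture step, ordinary convexity of $D(\cdot\|\tau_p)$ in the first argument suffices since the second argument is held fixed, so invoking full joint convexity is heavier machinery than needed.
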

\begin{proof}
If the compositions of these channels is such that $\mathcal{N}_{PF}$ is followed by $\mathcal{N}_{GAD}$, then this will trivially map the initial state to $\tau_{p}$ as mentioned earlier. However, it is interesting to observe the reverse composition  $\mathcal{N}_{PF}\circ\mathcal{N}_{GAD}$. The final map here is the phase-flip channel, which preserves any qubit state diagonal in energy  ($\sigma_{Z}$) Eigen basis. As the action of $\mathcal{N}_{GAD}$ produces $\tau_{p}=p\ketbra{0}{0}+(1-p)\ketbra{1}{1}$ is diagonal in the same basis, the final output will remain unaltered.\par 
Now, it is trivial to argue that any causally separable combination of these two actions, i.e., $\mathcal{S}_{\lambda}(\mathcal{N}_{PF},\mathcal{N}_{GAD})=\lambda \mathcal{N}_{PF}\circ\mathcal{N}_{GAD}+(1-\lambda)\mathcal{N}_{GAD}\circ\mathcal{N}_{PF}$ will produce nothing other than $\tau_{p}$ for any arbitrary input state, where $\mathcal{S}_{\lambda}(\mathcal{N}_{PF},\mathcal{N}_{GAD})$ is a supermap acting on the super operator space of $\mathcal{L}(\mathcal{H})$ and preserves the CPTP-ness of super operators. Hence, the free energy of the initial state can not increase under these channel actions.
\end{proof}
\begin{figure}[!htb]
	\includegraphics[scale=0.35]{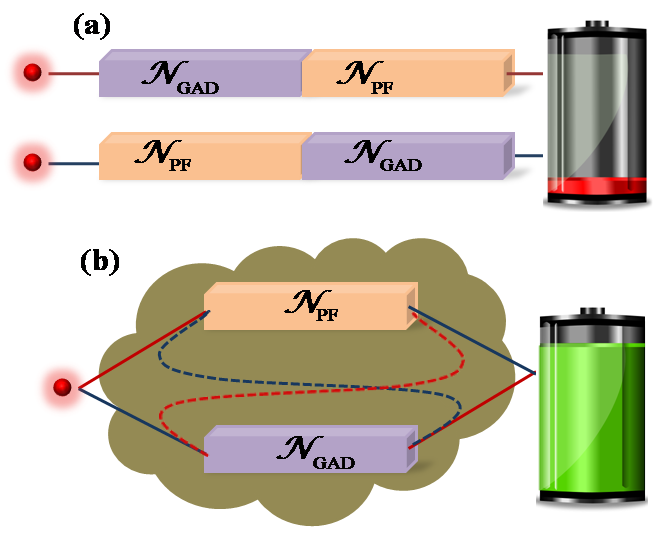}
	\caption{(a) No activation of free energy in the final state (depicted as a battery) for the definite causal order of occurrence of the thermal channels $\mathcal{N}_{GAD}$ and $\mathcal{N}_{PF}$. (b) However the same is activated when these channels appear in an indefinite causal order.}\label{proc}
\end{figure}
It is important to mention that, in the above theorem we have considered free energy of a quantum state with respect to a fixed bath, which is at a temperature corresponding to $\tau_{p}$ for $\frac{1}{2}\leq p\leq 1$. Assuming the governing system Hamiltonian as $H=\ketbra{1}{1}$, it is possible to assign a temperature $T_{p}=\frac{1}{k_{B} \ln(\frac{p}{1-p})}$ corresponding to $\tau_{p}$ as, where $k_{B}$ is the Boltzmann constant. So the free energy in this case can be defined as
\begin{equation}\label{e2}
	F_{p}(\rho)=\Tr(\rho H)-\frac{1}{k_{B}\ln(\frac{p}{1-p})}\times\ln 2\times S(\rho)
\end{equation}
where, $S(\rho)$ is the von-Neumann entropy corresponding to the initial quantum state $\rho$. It is easy to verify that the free energy of $\rho$, with respect to this fixed bath will be minimum when the state is identical to the bath state itself. Hence, in our case $\tau_{p}$ is the minimum free energetic state corresponding to the bath fixed at temperature $T_{p}$.\par 
Recently, in \cite{adesso'PRL} the authors have introduced a new framework, where the work extraction from the local marginal of a bipartite correlation is considered with the local operation and classical communication (LOCC) assistance from the other party. More precisely, to extract work locally from a correlated state, one of the constituents is allowed to apply only thermal operations on her/his part, whereas the other is allowed to apply any CPTP map and to communicate classically. In a similar fashion we design a framework, where along with the system qubit one party (say, Alice) is allowed to access an ancillary system and to apply all possible energy preserving global unitaries on the system-ancilla joint state. Thereafter she transmits the system particle through the channels $\mathcal{N}_{GAD}$ and $\mathcal{N}_{PF}$ successively to Bob. Then performing a measurement on the ancillary state and communicating the outcome, Alice can help Bob to extract work from the transmitted qubit in his possession. However, the action of any incoherent combination of these two channels produces a fixed quantum state with complete ignorance on the input as well as the correlation shared with its ancillary counterpart. Hence, even by invoking the assistance from ancillary particle the free energy of the system with respect to the fixed bath can not be enhanced, thereby in agreement with Theorem 1.
\subsection{Thermal enhancement of system state in expense of a resourceful state}
Instead of using the ancillary quantum state in the above mentioned framework, we can use it to control the order of occurrence of the thermal channels. More precisely, when the ancillary qubit is in $\ket{0}$ state then the generalized amplitude damping channel $\mathcal{N}_{GAD}$ will act after the application of phase flip channel $\mathcal{N}_{PF}$, and the channels will be in reverse order when the ancillary qubit is in $\ket{1}$ state. This controlling process will act as a super map on these two channels, and hence produce $\mathcal{S}_{\phi}(\mathcal{N}_{PF}, \mathcal{N}_{GAD})$, where $\phi$ denotes the state of controller qubit $\ket{\phi}$. Although the input channels are same the action of super map $\mathcal{S}_{\phi}$ is not similar to that as mentioned in Theorem 1, because the controlling qubit $\ket{\phi}$ is allowed to possess superposition in the controller basis, i.e., $\{\ket{0},\ket{1}\}$. Hence the action of this super map on the system-controller joint state in terms of there Kraus representations can be visualized as,
\begin{equation}\label{e3}
	\mathcal{S}_{\phi}(\mathcal{N}_{PF}, \mathcal{N}_{GAD})(\rho\otimes\ketbra{\phi}{\phi})= \sum_{i=0}^{3}\sum_{j=0}^{1}K_{ij}(\rho\otimes\ketbra{\phi}{\phi})K^{\dagger}_{ij} 
\end{equation} 
The Kraus operators $(K_{ij})$ are given by,
\begin{equation}\label{e4}
	K_{ij}=E_{i}F_{j}\otimes\ketbra{0}{0}+F_{j}E_{i}\otimes\ketbra{1}{1}
\end{equation} 
where, $E_{i}$'s and $F_{j}$'s are same as defined earlier.\\
In the present work, we use the controller in the state $\ket{\phi} =\frac{\ket{0}+\ket{1}}{\sqrt{2}}$, which is obviously a resourceful state with respect to the thermal bath kept at a fixed temperature $T_{p}$. Notably, using Eq.\eqref{e2} one can calculate the amount of free energy stored in the controller state which yields the value $0.5$. Without loss of generality we choose $p=q$ for $E_{i}$ and $F_{j}$'s (a more general study for these channels is discussed in the {\it Appendix}). Further, by taking $\rho=r\ketbra{0}{0}+(1-r)\ketbra{1}{1}$ as the initial system qubit, the final bipartite state under the action of this causally inseparable combination will take the form, 
\begin{widetext}
	\begin{align}\label{e5}
	\nonumber\mathcal{S}_{\phi}(\mathcal{N}_{PF}, \mathcal{N}_{GAD})(\rho\otimes\ketbra{+}{+})=\lambda\rho_{-}\otimes\ketbra{-}{-}+(1-\lambda)\rho_{+}\otimes\ketbra{+}{+} &\\
	\nonumber \text{where, } \lambda=(1-p)(p+r-2pr), &\\
	\rho_{\pm}=p_{\pm}\ketbra{0}{0}+(1-p_{\pm})\ketbra{1}{1},~~~  p_{-}=\frac{p(1-p)(1-r)}{\lambda}~~\text{and}~~   p_{+}=\frac{p(p+r-pr)}{(1-\lambda)}.
		\end{align}
\end{widetext}
Now, by making a measurement on the controller qubit and communicating the result, one can prepare $\rho_{\pm}$ with probabilities $(1-\lambda)$ and $\lambda$ respectively to the channel receiver.\par 
{\it Case 1. fixed bath:} In question of work extraction from the conditional output state, first we will consider the scenario where the receiver has an access to a thermal bath fixed at a temperature $T_{p}$. As a result, the average amount of work potential stored in the receiver's possession is given by
 \begin{equation}\label{e6}
 	W_{avg}=\lambda\times(F(\rho_{-})-F(\tau_{p}))+(1-\lambda)\times(F(\rho_{+})-F(\tau_{p}))
 \end{equation} 
 It is important to mention that for $r=0.5$, i.e., when the input state is maximally mixed, then the final state will be pinned to $\tau_{p}$. However, for $0.5<r\leq1$ the final state is different and as a result it possesses a certain amount of work potential in terms of free energy difference with the bath state $\tau_{p}$. Referring to Fig\ref{wrf} it is clear that for the range $0.5<r\leq1$, the difference between the two curves will quantify the amount of extractable work in case of causally inseparable combinations of two thermal channels $\mathcal{N}_{PF}$ and $\mathcal{N}_{GAD}$.\\
 \begin{figure}[!htb]
 	\includegraphics[scale=0.5]{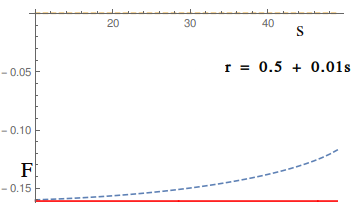}
 	\caption{Variation of final free energy with the input state parameter is depicted for channel parameters $p=q=0.8$. The \textcolor{blue}{dashed} curve denotes the free energy of the final output under the causally inseparable combinations of the channels $\mathcal{N}_{PF}$ and $\mathcal{N}_{GAD}$, whereas the \textcolor{red}{thick} line stands for the final free energy under causally separable combinations of these two channels.}\label{wrf}
 \end{figure}
  
  {\it Case 2. varying bath:} Further, it is interesting to study a different framework considering the reference bath in Bob's possession is not fixed but same as the system qubit temperature. Although the final output state is pinned at $\tau_{p}$ irrespective of the input qubit, even under causally inseparable framework, the extractable work potential in the final output state is nonzero and varies with the temperature associated to the system state. It is obvious that only for the choice of an input state as $\tau_{p}$, the free energy of the final state is minimum with respect to the thermal bath of input temperature, hence no work can be extracted. However, for any other choice of the input state, the final state possess a nonzero work potential with respect to the bath same as of input temperature. Interestingly, coherent control on the order of occurrence of the thermal channels $\mathcal{N}_{PF}$ and $\mathcal{N}_{GAD}$ can enhance the potential work storage of the final output. This enhancement with respect to the ground state probability of the input thermal state is depicted in Fig \ref{wrv}.
\subsection{Thermal enhancement of system state without expense of resourceful state}
The results presented above are weaker in the sense that they exploit a pure state with free energy value $0.5$ in energy unit, which is a resourceful state from the perspective thermodynamic work extraction. Recently, there has been a critique on the use of a perfect channel for the controlling qubit \cite{brukner'PRA} for the works pertaining to  information theoretic causal enhancements \cite{Ebler'PRL,our'arxiv}. However, it is also important to mention that these results have sufficient relevance to introduce a completely new framework in the context of information theory. Interestingly, the present work is a potential candidate in the resource theoretic direction of these enhancements. More precisely, in this subsection we will show that the thermodynamic advancement of the input thermal state is possible by controlling the action of thermal maps coherently with the help of a completely free state, i.e., the same thermal state. To appreciate the strength of this result we will first compare two different resource theoretic scenarios in Table \ref{tab1}.
\begin{center}
	\begin{table}[!htb]
			\begin{tabular}{ |c|c|c| } 
			\hline
			Resource Theory & Free State & Free Operation \\ 
			\hline\hline
			Purity & $\frac{\mathbb{I}}{2}$ & Noisy operation \\
			\hline 
			Thermodynamics& Bath state $\tau_{\beta}$ & Thermal operation \\ 
			\hline
		\end{tabular}
	\caption{Resource theory of purity is concerned with classical information processing tasks, whereas thermodynamics quantifies the amount of extractable work in presence of a thermal bath $\tau_{\beta}$.}\label{tab1}
	\end{table}
\end{center}
In the direction of classical information processing tasks, the authors in \cite{Ebler'PRL} exploit the causally inseparable order of occurrence of two identical completely noisy channels and succeed to convey a significant amount of classical information. However, the controlling qubit is a pure state, containing highest amount of resource value in context of purity. Moreover, it is impossible to obtain a partial advancement using the completely free state of the corresponding resource theory, i.e, $\frac{\mathbb{I}}{2}$ as a controller state. Similarly, from the perspective of thermodynamic resource theory, the resource value assigned to a quantum state depends upon the amount of free energy stored in it with respect to a given thermal bath at inverse temperature $\beta$. It is easy to show that the corresponding bath state contains minimum free energy with respect to itself, hence serves as a free state in the particular resource theoretic paradigm. However, in contrast to the information processing tasks, this free state is useful to control the sequence of occurrence of thermal channels and to enhance the thermodynamic potential of the input state, even if it is a free state. It is also important to mention that indefinite causal order of occurrence can not activate the quantum capacity for the channels with zero classical capacity \cite{our'arxiv}.\par 
   \begin{figure}[!b]
	\includegraphics[scale=0.5]{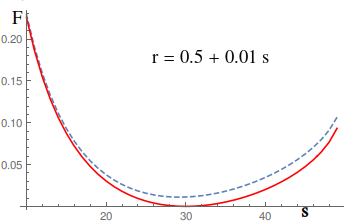}
	\caption{The free energy difference between the final and the input state with respect to the input bath is plotted for $p=q=0.8$. The potential work content of the final output in causally inseparable and separable combinations of two thermal channels are denoted by the \textcolor{blue}{dashed} and \textcolor{red}{thick} curves respectively.}\label{wrv}
\end{figure}
In this scenario we will consider a slightly different controlling sequence: if the controller state is $\ket{+}$ then the generalized amplitude damping channel will act after the action of phase flip channel, and in the reverse order when the controller state is $\ket{-}$. As a result, Eq.\eqref{e5} takes the form
\begin{widetext}
		\begin{align}\label{e7}
	\nonumber\mathcal{S}_{\phi}(\mathcal{N}_{PF}, \mathcal{N}_{GAD})(\rho^{\otimes 2})= \lambda\rho_{0}\otimes\ketbra{0}{0}+(1-\lambda)\rho_{1}\otimes\ketbra{1}{1} &\\
	\nonumber \text{where, } \lambda=(p+2r(1-p))(1-r+p(2r-1)), &\\
	\rho_{k}=p_{k}\ketbra{0}{0}+(1-p_{k})\ketbra{1}{1}, k\in\{0,1\},~~~  p_{0}=\frac{p[(1-p)(1+2 r^{2})+(3p-2) r]}{\lambda}~~\text{and}~~   p_{1}=\frac{p(1-r)[p+2r(1-p)]}{(1-\lambda)}.
	\end{align}

 Under this circumstance we have again considered two different scenarios (with both the fixed as well as varying temperature bath) and the final free energy is depicted in the Fig\ref{wor}.\par 
 \begin{figure}[!htb]
	\includegraphics[scale=0.5]{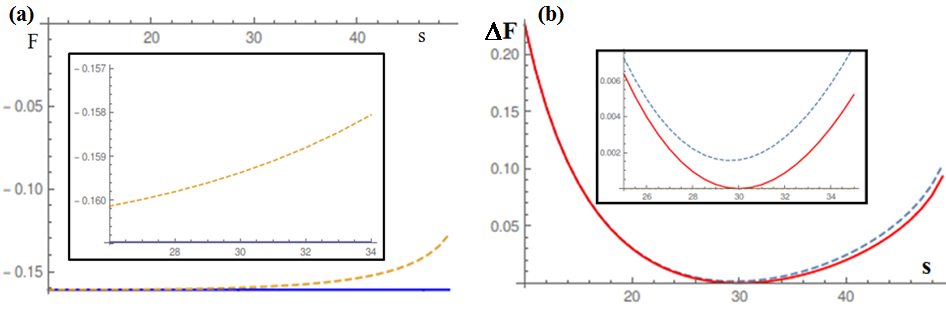}
	\caption{Channel parameter is chosen as $p=q=0.8$ and input state probability $r=0.5 + 0.01 s$ as mentioned earlier. (a) {\it Fixed bath scenario:} The \textcolor{blue}{thick} straight line at the bottom denotes the free energy of the final output under the causally separable combination of the channels $\mathcal{N}_{PF}$ and $\mathcal{N}_{GAD}$, whereas the \textcolor{brown}{dashed} line stands for the final free energy under causally inseparable combination of these two channels using the free thermal state as the controlling qubit. (b) {\it Varying bath scenario:} The \textcolor{red}{thick} curve denotes the free energy difference between the final output (under the causally separable combination of the channels) and the initial state. The \textcolor{blue}{dashed} line stands for the same under causally inseparable combination of these two channels with thermal controller.}\label{wor}
\end{figure}
\end{widetext}
 \section{Conclusion}
Here we have shown that thermal channels can be made useful to enhance the extractable amount of work in the final output, under the coherent control of occurrence. In particular, the combination of two specific thermal channels, namely {\it phase-flip} and {\it generalized amplitude damping} channel in any definite order of occurrence maps every input state to a fixed thermal state. Hence the extractable amount of work from the channel output with respect to the bath of that same temperature is zero independent of the input states. Interestingly, we have shown that if these two channels occur in causally inseparable order then the final free energy can be enhanced with respect to the same thermal bath, leading to an activation of the extractable work. However, this enhancement can not surpass the value of the initial free energy stored in the quantum state, thereby respecting the second law of thermodynamics. Further, our result has important consequence from the resource theoretic viewpoint also, in the sense that, one can activate these thermal channels even using the resource free thermal state as a controller. However, it associates a measurement cost in order to be consistent with the second law. It has been shown in \cite{abott} that advancements through the causally inseparable combination of unital channels is achievable also under the {\it definite} causal structure, but controlling their path coherently. However, the situation is much more complicated in case of a non-unital qubit channel and there is no deterministic way to construct the global channel mapping for the coherent control of the paths. In this sense, our result has a significant aspect to establish the causal advancement through the application of non-unital channel, i.e., the {\it generalized amplitude damping} channel. Hence, it may be interesting to study further if this causal inseparability of channel actions can be visualized as a coherent control of paths in definite causal order.
 \section{Acknowledgment}
 M.A. would like to acknowledge the CSIR project 09/093(0170)/2016-EMR-I for financial support.
               
\end{document}